\documentclass[11pt]{article}

\usepackage[T1]{fontenc}
\usepackage[utf8]{inputenc}
\usepackage{lmodern}
\usepackage{amsmath,amssymb,amsthm,bm}
\usepackage{graphicx}
\usepackage{microtype}
\usepackage{geometry}
\usepackage{authblk}
\usepackage{hyperref}
\usepackage[numbers,sort&compress]{natbib}
\usepackage{booktabs}
\usepackage{enumitem}
\usepackage{caption}
\usepackage{subcaption}
\usepackage{float}
\newtheorem{proposition}{Proposition}

\newcommand{\Ttask}{\mathcal{T}}
\newcommand{\Dtask}{\mathcal{D}}
\newcommand{\Ptask}{\mathcal{P}}
\newcommand{\Etask}{\mathcal{E}}
\newcommand{\Rtask}{\mathcal{R}}
\newcommand{\Stask}{\mathcal{S}}
\newcommand{\Otask}{\mathcal{O}}

\newcommand{\dd}{\mathrm{d}}

\title{Constructor-Theoretic Optical Time:\\Delay, Phase, Fisher Distinguishability, and Double-Slit Records as Physical Tasks}

\author[1]{Juan Sumaya-Mart\'{i}nez\thanks{Corresponding author: j.sumaya2011@gmail.com}}
\author[2]{Omar Olmos-L\'{o}pez}
\affil[1]{Faculty of Sciences, Universidad Aut\'{o}noma del Estado de M\'{e}xico, Toluca 50000, M\'{e}xico}
\affil[2]{School of Engineering and Sciences, Tecnol\'{o}gico de Monterrey, Monterrey, Nuevo Le\'{o}n, M\'{e}xico}

\date{}

\begin{document}
\maketitle

\begin{abstract}
Constructor theory proposes that physical laws may be expressed in terms of possible and impossible tasks rather than in terms of primitive time-parametrized evolution. This raises a concrete question for optical physics, where delay, phase, synchronization, time of flight, coherence time, and detector records are normally described using fields of the form $E(\mathbf r,t)$. We develop a constructor-theoretic formulation of optical time in which delay, phase, temporal ordering, synchronization, and record formation are treated as physical tasks acting on optical, reference, and record substrates. An optical delay is not introduced as a primitive external parameter; it is an attribute made operationally meaningful by the possibility of transformations that map input optical attributes into distinguishable output records. We define delay constructors, phase-delay equivalence tasks, temporal-ordering tasks, synchronization tasks, and Fisher-admissible temporal-estimation tasks. Two formal propositions state that optical temporal attributes are operationally defined only relative to comparison and record-forming tasks, and that the Cramer-Rao bound is a constructor-theoretic impossibility boundary for a specified measurement class. The framework is illustrated with an interferometric delay constructor, a dispersive group-delay constructor, and a double-slit diffraction constructor. In the double-slit case the usual diffraction pattern is recovered as the record distribution of a phase-delay comparison task, with path delay $\Delta\tau(\theta)=d\sin\theta/c$ and phase $\phi(\theta)=\omega\Delta\tau(\theta)$. We also compare this task-based record distribution with the electromagnetic Fraunhofer aperture result and show that they coincide in the common scalar, paraxial, thin-aperture limit. The proposal does not replace Maxwellian optics; rather, it reorganizes optical time as an emergent operational structure arising from possible transformations, distinguishable records, and resource-dependent impossibility statements.
\end{abstract}

\noindent\textbf{Keywords:} constructor theory; time; optical delay; phase; Fisher information; double-slit diffraction; foundations of physics; optical metrology; distinguishability.

\section{Introduction}
The formalism of optical physics is saturated with temporal notions. A quasi-monochromatic field is commonly written as $E(\mathbf r,t)=\mathrm{Re}\{\mathcal E(\mathbf r)e^{-i\omega t}\}$; a delay line is represented by $E(t)\mapsto E(t-\tau)$; a phase shifter implements $\psi\mapsto e^{i\phi}\psi$; a dispersive medium is described by a spectral phase $k(\omega)L$; and an interferometer converts a path delay into a measurable intensity modulation. These formulations are indispensable in practical optics. Yet they presuppose a temporal parameter $t$ against which propagation, phase accumulation, and measurement are described.

The status of time in fundamental physics is more subtle. Page and Wootters described apparent dynamics through correlations with internal clock degrees of freedom rather than through an external time parameter \citep{PageWootters1983}. Rovelli argued that background-independent physics motivates a relational formulation in which time is not part of the fundamental vocabulary \citep{Rovelli2011ForgetTime}. Salecker and Wigner emphasized that measurements of spacetime intervals have physical limitations imposed by the quantum nature of measuring devices \citep{SaleckerWigner1958}. Deutsch's constructor theory proposes a different explanatory mode: laws are to be expressed as statements about which transformations of physical substrates are possible or impossible, and why \citep{Deutsch2013Constructor}. In the constructor theory of information, information is not an a priori mathematical object, but a physically instantiated set of possible and impossible transformations \citep{DeutschMarletto2015Info}. Most recently, the constructor theory of time asks how duration and dynamics can be meaningful when laws in their constructor-theoretic form do not refer to time \citep{DeutschMarlettoTime2025}. Experimental directions for testing constructor-theoretic principles are also beginning to be discussed \citep{MarlettoDeutschVedral2026Tests}.

This paper asks a restricted but concrete question: what becomes of optical time if the fundamental statement is not ``the field evolves in time'' but ``a certain optical task is possible or impossible''? We do not claim to derive the full temporal structure of physics, nor do we deny the usefulness of $E(t)$. We propose a domain-specific operational reconstruction: optical time is what is made available by tasks that delay, phase-shift, order, synchronize, compare, and record optical attributes.

The central claim is that delay and phase become temporal only through a network of transformations. A delay $\tau$ is not directly visible. It becomes a physical temporal attribute when a constructor transforms it into records distinguishable from the records generated by other delays. A phase $\phi$ becomes a temporal attribute only relative to a frequency reference and a phase-delay equivalence task. A clock is not an external parameter; it is a reference substrate participating in synchronization and ordering tasks. A measured temporal fact requires stable record formation.

Fisher information then enters naturally. If a delay $\tau$ controls a probability distribution $p(y|\tau)$ over measurement records, the classical Fisher information
\begin{equation}
F(\tau)=\int \dd y\,p(y|\tau)\left[\frac{\partial}{\partial\tau}\ln p(y|\tau)\right]^2
\label{eq:fisher_general}
\end{equation}
quantifies the local distinguishability of nearby delays. In the task-based reading proposed here, $F(\tau)$ is a resource that determines whether the task
\begin{equation}
\Etask_{\tau,\Delta\tau}:
\hbox{unknown delay attribute}\longrightarrow \hbox{classical estimate }\hat\tau
\end{equation}
can be realized with target root-mean-square uncertainty $\Delta\tau$. Under the usual regularity conditions for an unbiased estimator, the Cramer-Rao bound,
\begin{equation}
\mathrm{Var}(\hat\tau)\geq F(\tau)^{-1},
\label{eq:cramer_rao_intro}
\end{equation}
becomes a task-impossibility statement: demanding $\Delta\tau^2<F(\tau)^{-1}$ specifies a transformation that is impossible for the selected optical substrate, reference, detector, measurement architecture, and resource budget. This is consistent with the role of Fisher information in optical metrology and resolution theory \citep{Fisher1922,Cramer1946,Rao1945,ChaoWardOber2016,Tsang2016}.

The manuscript is organized as follows. Section~\ref{sec:ct_background} introduces the constructor-theoretic vocabulary and clarifies the non-temporal reading of repeatability. Section~\ref{sec:substrates} defines optical, reference, and record substrates. Sections~\ref{sec:delay}--\ref{sec:ordering} formulate delay, phase-delay equivalence, temporal ordering, and synchronization as tasks. Section~\ref{sec:fisher} gives the Fisher-information possibility boundary and states the two main propositions. Sections~\ref{sec:interferometer}, \ref{sec:dispersion}, and \ref{sec:double_slit} illustrate the framework with interferometric delay estimation, dispersive group delay, and double-slit diffraction, including an explicit comparison with the electromagnetic Fraunhofer aperture result. Section~\ref{sec:records} discusses records and irreversibility. Sections~\ref{sec:relation}, \ref{sec:discussion}, and \ref{sec:conclusion} discuss scope, limitations, and outlook.

\section{Constructor-theoretic background}
\label{sec:ct_background}
Constructor theory replaces the usual dynamical question, ``what happens next?'', by the counterfactual question, ``which transformations can be made to happen, and which cannot?'' \citep{Deutsch2013Constructor}. A physical system capable of possessing attributes is called a substrate. An attribute is a set of microscopic states sharing a specified property. A task is a set of input-output pairs
\begin{equation}
\Ttask=\{x_i\rightarrow y_i\}_{i\in I},
\end{equation}
where $x_i$ and $y_i$ are attributes of one or more substrates. A constructor for a task is a physical system that can cause the transformation while retaining the relevant attributes required to instantiate the same task again to arbitrary accuracy.

That last phrase is potentially delicate in a paper about time. The repeatability of a constructor should not be read here as primitive temporal succession. Rather, it is a counterfactual property: the constructor retains the attributes that make the specified transformation possible across allowed instantiations. In ordinary laboratory language this is described dynamically as repeated operation in time, but the constructor-theoretic statement itself concerns the possibility of the task and the persistence of the relevant constructor attributes, not the fundamental existence of an external temporal parameter.

A task may be possible, impossible, or possible only to an approximation. The explanatory content lies in specifying the boundary between possible and impossible transformations. Constructor theory of information applies this viewpoint to information-bearing media and grounds information in physically possible transformations \citep{DeutschMarletto2015Info}. Constructor theory of time extends the program by asking how duration and dynamics can be meaningful even though laws in constructor-theoretic form do not refer to time \citep{DeutschMarlettoTime2025}.

The present work does not attempt to reconstruct the full constructor theory of time. Instead, it identifies an optical subproblem. In practical photonics, the temporal concepts most frequently encountered are delay, phase, ordering, synchronization, coherence time, and measurement records. Each is already operational. A delay line implements a transformation. A phase shifter implements a transformation. An interferometer maps delay into intensity records. A detector maps field attributes into classical records. Optical time can therefore be studied as a structure arising from tasks among substrates.

\section{Optical substrates, attributes, and records}
\label{sec:substrates}
We introduce three families of substrates.

First, the field substrate $S_F$ supports optical attributes such as spectral amplitude, temporal envelope, phase, polarization, spatial mode, coherence, or photon-number distribution. Second, the reference substrate $S_C$ supports clock-like or phase-reference attributes: a local oscillator, an optical frequency comb, a stable cavity, a synchronized pulse train, or another optical mode used as a reference. Third, the record substrate $S_R$ supports stable detector records: spatial counts, temporal arrival bins, homodyne currents, interferometric output counts, or macroscopic memory states.

A general optical-temporal task may be written as
\begin{equation}
\Ttask:x_F\otimes x_C\otimes x_R\longrightarrow y_F\otimes y_C\otimes y_R,
\end{equation}
where $x_F,x_C,x_R$ are input attributes and $y_F,y_C,y_R$ are output attributes. In many metrological tasks the final field attribute is discarded and the essential output is a stable record,
\begin{equation}
\Ttask_{\rm rec}:x_F\otimes x_C\otimes x_R\longrightarrow r(\theta),
\end{equation}
where $\theta$ is the parameter encoded in the task. In this paper $\theta$ will often be a delay $\tau$, a phase $\phi$, a slit separation $d$, or a detection angle $\theta$.

Figure~\ref{fig:task_network} summarizes the operational structure. A preparation attribute and a reference substrate are processed by an optical constructor. The output record supports inference. The temporal parameter is meaningful because the task network produces distinguishable records, not because an external time variable was added as a primitive explanatory element.

\begin{figure}[t]
\centering
\includegraphics[width=0.92\linewidth]{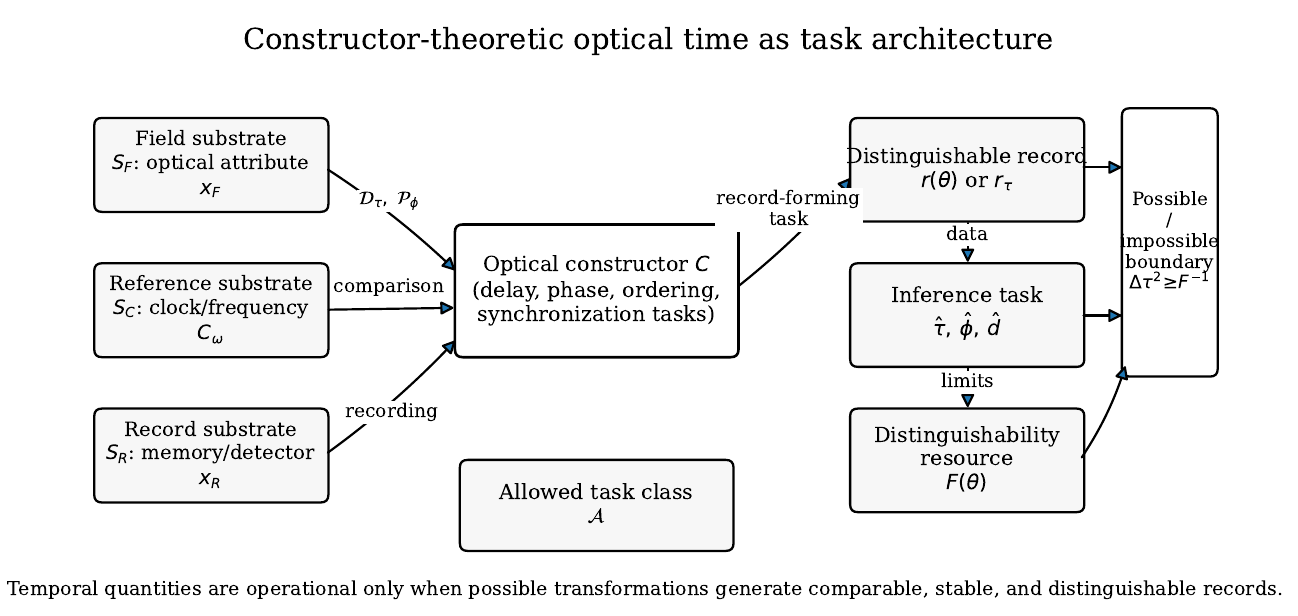}
\caption{Constructor-theoretic optical time as a network of physical tasks. A preparation attribute and a reference substrate are acted on by an optical constructor. The resulting record has a distinguishability resource, quantified locally by Fisher information, which determines whether the inference task is possible to a prescribed accuracy.}
\label{fig:task_network}
\end{figure}

\section{Delay without primitive time}
\label{sec:delay}
In ordinary optics, a delay is represented as
\begin{equation}
\psi(t)\longrightarrow \psi(t-\tau).
\end{equation}
This notation is compact and useful, but it treats $t$ as an already available parameter. We instead define a delay constructor as a device implementing a task
\begin{equation}
\Dtask_\tau:x_F\longrightarrow x_F^{(\tau)},
\label{eq:delay_task}
\end{equation}
where $x_F^{(\tau)}$ is an output attribute empirically equivalent, relative to a specified reference and detector class, to an optical delay $\tau$.

The phrase ``relative to a specified reference and detector class'' is essential. A delay is not an isolated metaphysical property. It is meaningful because comparison tasks can relate the output attribute to a reference. Thus two implementations $\Dtask_\tau^{(1)}$ and $\Dtask_\tau^{(2)}$ belong to the same operational delay class if all allowed comparison tasks produce record distributions indistinguishable within tolerance $\epsilon$:
\begin{equation}
\Dtask_\tau^{(1)}\sim_\epsilon \Dtask_\tau^{(2)}
\quad\Longleftrightarrow\quad
D\big[p_1(r|\tau),p_2(r|\tau)\big]\leq\epsilon,
\label{eq:delay_equiv}
\end{equation}
where $D$ is an operational distance between record distributions. In local estimation problems, the Fisher metric provides a natural infinitesimal distance,
\begin{equation}
\dd s^2=F(\tau)\dd\tau^2.
\end{equation}

A free-space path, an optical fiber, a Fabry-Perot cavity, a slow-light medium, and a programmable pulse shaper may instantiate different microscopic dynamics while realizing approximately the same delay task. What they share is not a primitive time coordinate but a common input-output relation relative to a class of optical comparison tasks.

\begin{figure}[t]
\centering
\includegraphics[width=0.84\linewidth]{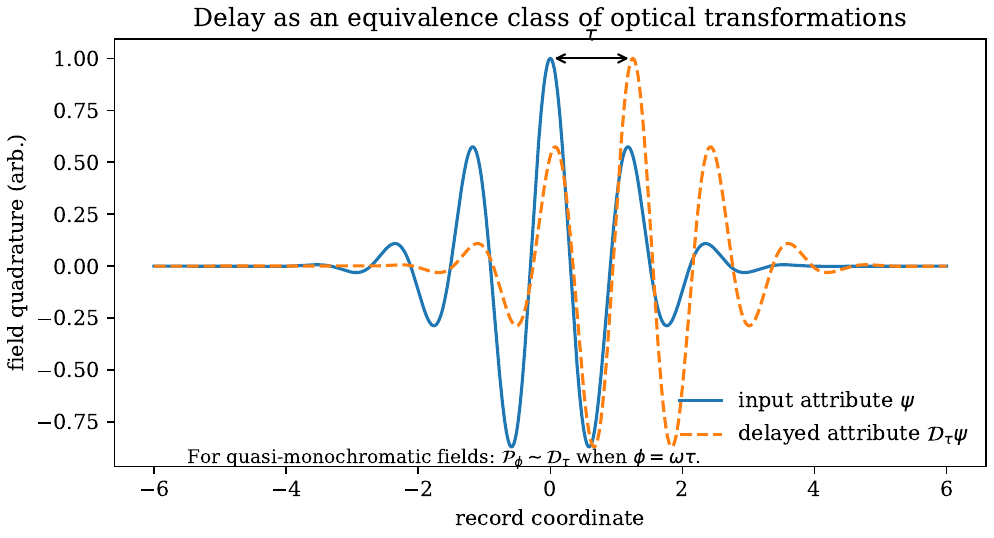}
\caption{Delay as an equivalence class of optical transformations. The output pulse is not interpreted primarily as a field that has evolved under primitive time, but as the result of a delay task $\Dtask_\tau$ that produces a distinguishable output attribute. For quasi-monochromatic fields, phase and delay become operationally equivalent through $\phi=\omega\tau$ only relative to a frequency reference.}
\label{fig:delay_equiv}
\end{figure}

\section{Phase-delay equivalence}
\label{sec:phase_delay}
Optical phase is often the most precise carrier of temporal information. A phase shifter implements
\begin{equation}
\Ptask_\phi:\psi\longrightarrow e^{i\phi}\psi.
\end{equation}
For a monochromatic reference of angular frequency $\omega$, one writes
\begin{equation}
\phi=\omega\tau.
\end{equation}
In the present formulation this equation is not merely a substitution. It defines a phase-delay equivalence task: a phase transformation and a delay transformation are operationally equivalent when they lead to the same class of records under a specified reference frequency.

Let $C_\omega$ denote a reference substrate supporting a sufficiently sharp frequency attribute. We define the phase-delay equivalence relation by
\begin{equation}
\Ptask_\phi\equiv_{C_\omega,\epsilon}\Dtask_\tau
\quad\hbox{if}\quad
D\big[p(r|\Ptask_\phi,C_\omega),p(r|\Dtask_\tau,C_\omega)\big]\leq\epsilon,
\label{eq:phase_delay_equiv}
\end{equation}
with $\phi=\omega\tau$ within the operational tolerance. Without a frequency reference, a phase shift is not yet a time interval; it is only a transformation of an optical attribute. With a reference, the task becomes a delay-estimation task.

This point is important for foundations. Optical time is often treated as if it were simply read off from phase. But phase does not define time by itself. It defines time only within a network of transformations that includes a reference, an interference or comparison operation, and a stable record.

\section{Temporal ordering and synchronization as tasks}
\label{sec:ordering}
A temporal-ordering task concerns two event-like optical attributes $A$ and $B$. It may be written as
\begin{equation}
\Otask:\{A,B\}\longrightarrow\{A\prec B,\; B\prec A,\; A\sim B\},
\end{equation}
where $\prec$ denotes operational precedence and $\sim$ denotes indistinguishability within the available resolution. The task is possible only if the record substrate can distinguish the relevant alternatives. For two pulse-arrival attributes separated by $\delta\tau$, the ordering task is locally impossible when
\begin{equation}
F(\tau)(\delta\tau)^2\ll 1.
\end{equation}

A synchronization task involves two reference substrates $C_1$ and $C_2$:
\begin{equation}
\Stask:C_1\otimes C_2\longrightarrow C_1\sim C_2.
\end{equation}
In optical laboratories this may be implemented by phase-locking lasers, stabilizing frequency combs, distributing a clock over a fiber, or using correlated pulses. In the task-based reading, synchronization is not the assertion that two systems share a primitive external time. It is the possibility of maintaining a relation among reference attributes under a prescribed class of comparison tasks.

This formulation also clarifies why clocks and memories are not optional add-ons. A temporal property must be recordable, comparable, or reproducible to enter physics. A system that cannot support ordering, synchronization, or record-forming tasks cannot function as an optical time substrate.

\section{Fisher distinguishability and temporal-estimation tasks}
\label{sec:fisher}
Consider a delay-dependent measurement with record distribution $p(y|\tau)$. The Fisher information is
\begin{equation}
F(\tau)=\int \dd y\,\frac{1}{p(y|\tau)}\left[\frac{\partial p(y|\tau)}{\partial\tau}\right]^2.
\end{equation}
If the total resource number is $N$ and the records are independent, the total Fisher information scales as $N F_1(\tau)$ under ideal conditions.

We define a temporal-estimation task
\begin{equation}
\Etask_{\tau,\Delta\tau}:x_F^{(\tau)}\otimes x_C\otimes x_R\longrightarrow \hat\tau,
\end{equation}
where the desired output is a classical estimate satisfying
\begin{equation}
\mathrm{Var}(\hat\tau)\leq \Delta\tau^2.
\end{equation}
For unbiased estimators satisfying the usual regularity assumptions, the task is Fisher-admissible when
\begin{equation}
\Delta\tau^2\geq F(\tau)^{-1},
\label{eq:fisher_admissible}
\end{equation}
and Fisher-impossible for the specified measurement class when
\begin{equation}
\Delta\tau^2< F(\tau)^{-1}.
\label{eq:fisher_impossible}
\end{equation}

\begin{proposition}[Operational definability of optical temporal attributes]
An optical temporal attribute, such as a delay, phase-derived time interval, ordering relation, or synchronization relation, is operationally definable only relative to a class of comparison and record-forming tasks. If no admissible task maps the putative temporal attribute into distinguishable records, then the attribute has no operational temporal content within that optical theory.
\end{proposition}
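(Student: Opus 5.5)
The plan is to treat the proposition as a consequence of the definitions in Sections~\ref{sec:delay}--\ref{sec:fisher}, so the argument will be definitional rather than analytic. First, make precise what ``operational temporal content'' means. I will say that a putative attribute $a$, ranging over values $\tau\in\Theta$, has operational temporal content within an optical theory if there is an admissible task $\Ttask_{\rm rec}:x_F^{(\tau)}\otimes x_C\otimes x_R\to r$ whose record distributions $p(r|\tau)$ are not all $\epsilon$-equivalent. That is, for some pair $\tau\neq\tau'$ we need $D[p(r|\tau),p(r|\tau')]>\epsilon$ in the sense of \eqref{eq:delay_equiv}, or locally $F(\tau)>0$ somewhere. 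With this definition in hand, the proof goes through each of the four kinds of temporal attribute named in the statement and checks that its defining relation in the paper is stated entirely in terms of such tasks.

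\textbf{Delays.} By \eqref{eq:delay_task} and \eqref{eq:delay_equiv}, a delay class is by definition an equivalence class under $\sim_\epsilon$. The relation $\sim_\epsilon$ is defined only through record distributions of allowed comparison tasks, so delays are definable only relative to that class of tasks.

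\textbf{Phase-derived intervals.} By \eqref{eq:phase_delay_equiv}, a phase becomes a time interval only through $\equiv_{C_\omega,\epsilon}$. That relation again uses a reference substrate and record distributions.

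\textbf{Ordering and synchronization.} $\Otask$ and $\Stask$ are defined as tasks whose outputs are distinguishable alternatives on $S_R$ or relations between reference attributes under comparison tasks. In each of the four cases, then, the attribute is identified with an equivalence class of transformations modulo record indistinguishability. This establishes the first sentence of the proposition.

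For the second sentence I will argue by contraposition. Suppose no admissible task maps $a$ into distinguishable records, so that for every admissible $\Ttask_{\rm rec}$ and all $\tau,\tau'$ we have $D[p(r|\tau),p(r|\tau')]\le\epsilon$. Then every pair of values lies in the same $\sim_\epsilon$ or $\equiv_{C_\omega,\epsilon}$ class. This means the quotient of the putative attribute space by operational equivalence is a single point. Locally, the Fisher information vanishes (or $F(\delta\tau)^2\ll1$ for all relevant $\delta\tau$), so every ordering task $\Otask$ outputs only $A\sim B$. By \eqref{eq:fisher_impossible}, every estimation task $\Etask_{\tau,\Delta\tau}$ with finite $\Delta\tau$ is Fisher-impossible, and synchronization tasks cannot separate $C_1\sim C_2$ from its negation. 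Since every route to temporal content in the framework passes through one of these task types, the attribute carries no operational temporal content.

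The main obstacle I expect is the closure step. It assumes that the task types $\Dtask,\Ptask,\Otask,\Stask,\Etask$ exhaust the ways a temporal attribute can enter the optical theory. This is essentially the constructor-theoretic premise of Section~\ref{sec:ct_background}, and I would state it explicitly as an assumption of the framework rather than try to derive it. A secondary subtlety is that $\epsilon$-indistinguishability is not transitive. I would handle this by working with the limit $\epsilon\to0$, or by defining classes through the transitive closure. I would also note that vanishing $F$ at a point does not preclude global distinguishability, which is why the hypothesis is phrased globally via $D$.
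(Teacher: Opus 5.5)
Your proposal is correct and takes essentially the same definitional route as the paper. The paper simply posits that the operational content of each $\theta_i$ is exhausted by the family $\{p_A(r|\theta_i):A\in\mathcal A\}$, and concludes that identical record distributions across all allowed tasks leave no distinct temporal property; that posit is exactly your closure premise, which you state explicitly as an assumption. Your case-by-case check of delay, phase, ordering and synchronization, and your Fisher-information corollaries, add detail the paper compresses into one sentence. One small correction: the claim that every finite-$\Delta\tau$ estimation task is Fisher-impossible needs $F=0$ exactly, i.e.\ your $\epsilon\to 0$ limit, and does not follow from $F(\delta\tau)^2\ll1$ alone.
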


\begin{proof}
Let $\Theta=\{\theta_i\}$ denote a set of putative temporal attributes and let $\mathcal A$ be the allowed class of comparison and record-forming tasks. The operational content of $\theta_i$ is exhausted by the family of record distributions $\{p_A(r|\theta_i):A\in\mathcal A\}$. If for all $A\in\mathcal A$ and all $i,j$ one has $D[p_A(r|\theta_i),p_A(r|\theta_j)]=0$ within the tolerated resolution, then no allowed transformation distinguishes the attributes. They therefore define no distinct temporal property in the operational theory. Conversely, if at least one allowed task produces distinguishable records, the attribute acquires operational content relative to that task class. This is precisely the task-based reconstruction of delay, phase-derived time, ordering, and synchronization used above.
\end{proof}

\begin{proposition}[Fisher-information impossibility boundary]
For a specified optical substrate, reference, detector, noise model, and measurement architecture, the Cramer-Rao inequality defines a constructor-theoretic impossibility boundary for local temporal-estimation tasks: a task demanding $\Delta\tau^2<F(\tau)^{-1}$ is impossible within the specified measurement class.
\end{proposition}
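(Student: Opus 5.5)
The plan is to reduce the statement to the classical Cramér--Rao inequality by making precise what ``the specified measurement class'' fixes. First I will formalize a realization of $\Etask_{\tau,\Delta\tau}$ within that class. Such a realization is a pair consisting of an admissible record-forming task $A$ and an estimator $\hat\tau:r\mapsto\hat\tau(r)$. The substrate, reference, detector, noise model and architecture determine $A$, and hence a fixed family $p(r|\tau)$. Any classical post-processing of the record is an admissible estimator.

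Next I will impose local unbiasedness on the estimator, $\int \dd r\,p(r|\tau)\hat\tau(r)=\tau$ in a neighbourhood of the true $\tau$. I will also impose the usual regularity conditions: common support independent of $\tau$, differentiability of $p$, and exchange of derivative and integral. By the Operational-definability proposition, the operational content of the delay attribute is exhausted by $p(r|\tau)$. So no realization in the class can access information beyond this family, and this is what licenses treating the class as closed under estimation.

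The core step is the standard derivation. Differentiating the unbiasedness condition gives
\begin{equation}
\int \dd r\,(\hat\tau(r)-\tau)\,\partial_\tau p(r|\tau)=1 .
\end{equation}
I will then write $\partial_\tau p=p\,\partial_\tau\ln p$ and apply Cauchy--Schwarz in $L^2(p)$, which yields $1\le \mathrm{Var}(\hat\tau)\,F(\tau)$. For $N$ independent records, $F$ is replaced by $NF_1$ as stated above. Hence every realization satisfies $\mathrm{Var}(\hat\tau)\ge F(\tau)^{-1}$.

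It follows that the output attribute ``$\hat\tau$ with $\mathrm{Var}(\hat\tau)\le\Delta\tau^2$'' is unreachable whenever $\Delta\tau^2<F(\tau)^{-1}$. In constructor-theoretic terms, no constructor in the class performs $\Etask_{\tau,\Delta\tau}$. The condition also cannot be met approximately, because the bound is uniform over estimators and a strict inequality leaves a finite gap. The task is therefore impossible in the class, matching \eqref{eq:fisher_impossible}.

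I expect the main obstacle to be the interface between the statistical inequality and the constructor notion of impossibility. Constructors must work repeatedly, and approximate possibility is allowed. I will argue that repetition only multiplies independent records, which is already accounted for by $N$ inside $F$, so it does not evade the bound. I will also be explicit that biased estimators, or ones violating regularity, lie outside the specified class. The claim is thus relative to the class and makes no statement about quantum-optimal measurements (quantum Fisher information), which would define a different class.
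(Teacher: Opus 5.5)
Your proposal is correct and follows the same route as the paper. The specification fixes the record family $p(y|\tau)$ and hence $F(\tau)$; the Cram\'er--Rao inequality for unbiased, regular estimators gives $\mathrm{Var}(\hat\tau)\geq F(\tau)^{-1}$; so a target $\Delta\tau^2<F(\tau)^{-1}$ cannot be realized within that class. The paper cites the Cram\'er--Rao theorem where you re-derive it via Cauchy--Schwarz, and it leaves implicit the points about approximate possibility and repetition that you spell out.
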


\begin{proof}
The specified substrates and architecture define a family of record distributions $p(y|\tau)$ and hence a Fisher information $F(\tau)$. For any unbiased estimator satisfying the regularity conditions of the Cramer-Rao theorem, $\mathrm{Var}(\hat\tau)\geq F(\tau)^{-1}$. Therefore a task whose output specification requires $\mathrm{Var}(\hat\tau)\leq\Delta\tau^2<F(\tau)^{-1}$ contradicts the statistical distinguishability constraints generated by that task family. Since constructor-theoretic possibility requires the physical realizability of the specified input-output transformation, the requested task is impossible relative to the specified measurement class.
\end{proof}

Equations~\eqref{eq:fisher_admissible} and \eqref{eq:fisher_impossible} are not new statistical facts. Their novelty here is interpretive and organizational: a statistical estimation bound is recast as a physical impossibility statement about transformations. Three kinds of impossibility should be distinguished: structural impossibility, when $\partial_\tau p(y|\tau)=0$ and $F(\tau)=0$; resource-limited impossibility, when the task is impossible for the available photons, bandwidth, visibility, detector response, or reference stability but may become possible with improved resources; and fundamental impossibility, when a task remains impossible even as ordinary resources are increased, for example because of quantum distinguishability limits or constraints imposed by the allowed physical transformations \citep{Helstrom1976,BraunsteinCaves1994,Giovannetti2006,Paris2009}.

\begin{figure}[t]
\centering
\includegraphics[width=0.82\linewidth]{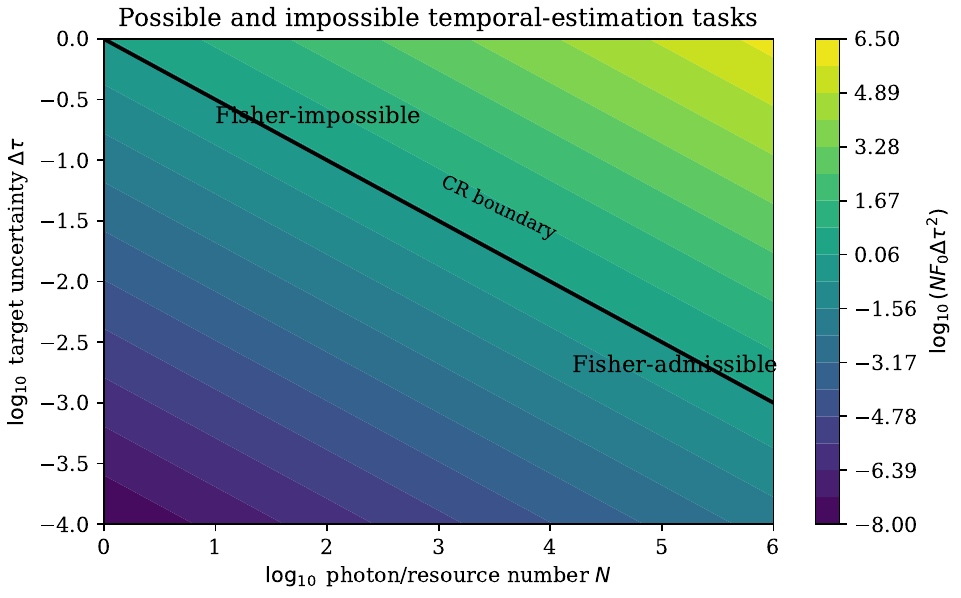}
\caption{Possible and impossible temporal-estimation tasks. For a resource number $N$ and target uncertainty $\Delta\tau$, the Cramer-Rao boundary separates Fisher-admissible tasks from tasks demanding more temporal distinguishability than the specified constructor can provide.}
\label{fig:possible_impossible}
\end{figure}

\section{Interferometric delay constructor}
\label{sec:interferometer}
A two-output interferometer provides the simplest quantitative example. Let a relative delay $\tau$ produce a phase
\begin{equation}
\phi=\omega\tau+\phi_0,
\end{equation}
where $\omega$ is the reference angular frequency and $\phi_0$ is a controllable bias phase. The two output probabilities of a balanced interferometer with visibility $V$ may be written as
\begin{equation}
p_\pm(\phi)=\frac{1}{2}\left[1\pm V\cos\phi\right].
\end{equation}
The Fisher information for estimating $\phi$ from the two-output record is
\begin{equation}
F_\phi=\sum_{s=\pm}\frac{1}{p_s(\phi)}\left[\frac{\partial p_s(\phi)}{\partial\phi}\right]^2
=\frac{V^2\sin^2\phi}{1-V^2\cos^2\phi}.
\end{equation}
For $N$ independent detected photons, the delay Fisher information is
\begin{equation}
F_\tau=N\omega^2F_\phi,
\end{equation}
and therefore
\begin{equation}
\Delta\tau\geq \frac{1}{\omega\sqrt{N F_\phi}}.
\end{equation}

This is normally read as a metrological sensitivity bound. In the present formulation it states that a two-output interferometer is a constructor mapping the delay attribute $\tau$ into a binary record. Only delay-estimation tasks satisfying the bound are possible for the specified visibility and photon budget. Visibility loss reduces the possibility of the task because it reduces the distinguishability of records, as shown in Fig.~\ref{fig:interferometer_fisher}.

\begin{figure}[t]
\centering
\includegraphics[width=0.82\linewidth]{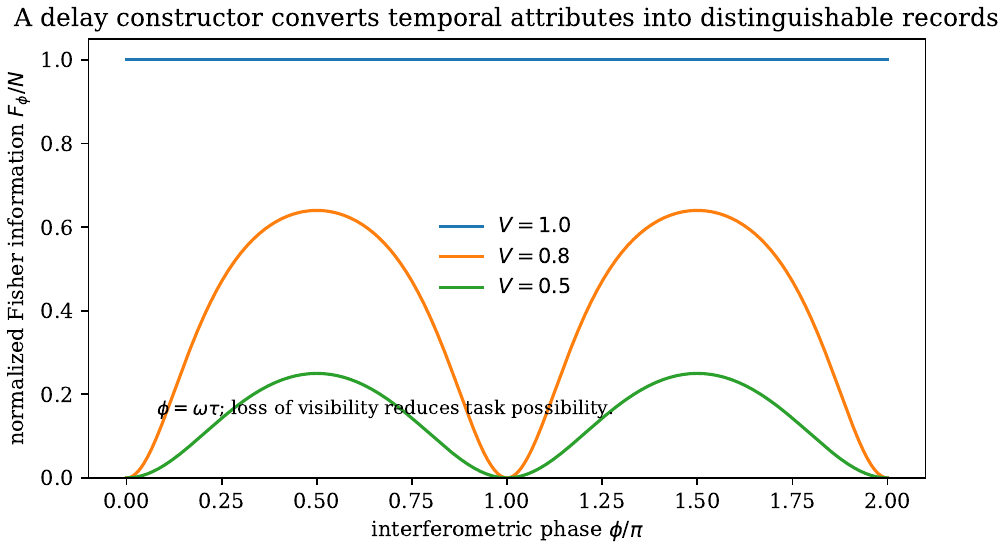}
\caption{Fisher information of a two-output interferometric delay constructor. The interferometer converts a delay into a phase-dependent record. Reduced visibility lowers the Fisher information and therefore enlarges the class of Fisher-impossible delay-estimation tasks.}
\label{fig:interferometer_fisher}
\end{figure}

Several foundational points follow. First, $\tau$ is inferred through a phase-delay equivalence task, not measured as a primitive time interval. Second, the reference frequency $\omega$ is part of the constructor; without it, the phase record does not define a delay. Third, the record is essential: before detection and stable storage, the optical transformation has not yet produced a classical temporal fact.

\section{Dispersive group-delay constructor}
\label{sec:dispersion}
A dispersive medium provides a second example. In frequency space, propagation through a length $L$ is represented by
\begin{equation}
\widetilde E_{\rm out}(\omega)=\widetilde E_{\rm in}(\omega)e^{ik(\omega)L}.
\end{equation}
Expanding around a carrier frequency $\omega_0$,
\begin{equation}
k(\omega)L\simeq k_0L+(\omega-\omega_0)\tau_g+\frac{1}{2}(\omega-\omega_0)^2\beta_2L+\cdots,
\end{equation}
where
\begin{equation}
\tau_g=\left.\frac{\dd}{\dd\omega}[k(\omega)L]\right|_{\omega_0}
\end{equation}
is the group delay and $\beta_2$ is the group-velocity dispersion coefficient. Standard optics interprets this as propagation through a medium \citep{MandelWolf1995,SalehTeich2019}. In the task-based formulation it is a transformation of spectral attributes into output temporal-record attributes,
\begin{equation}
\mathcal G:x_{\rm spectral}\longrightarrow x_{\rm group\ delay}\otimes x_{\rm broadened}.
\end{equation}

The group delay is operationally meaningful only because comparison tasks can relate the output pulse to a reference pulse. When dispersion broadens the pulse, the ordering and delay-estimation tasks may become less precise even if the mean group delay remains well defined. Figure~\ref{fig:dispersion} now displays a visible broadening sequence, emphasizing that the constructor changes the temporal distinguishability structure of the pulse.

\begin{figure}[t]
\centering
\includegraphics[width=0.82\linewidth]{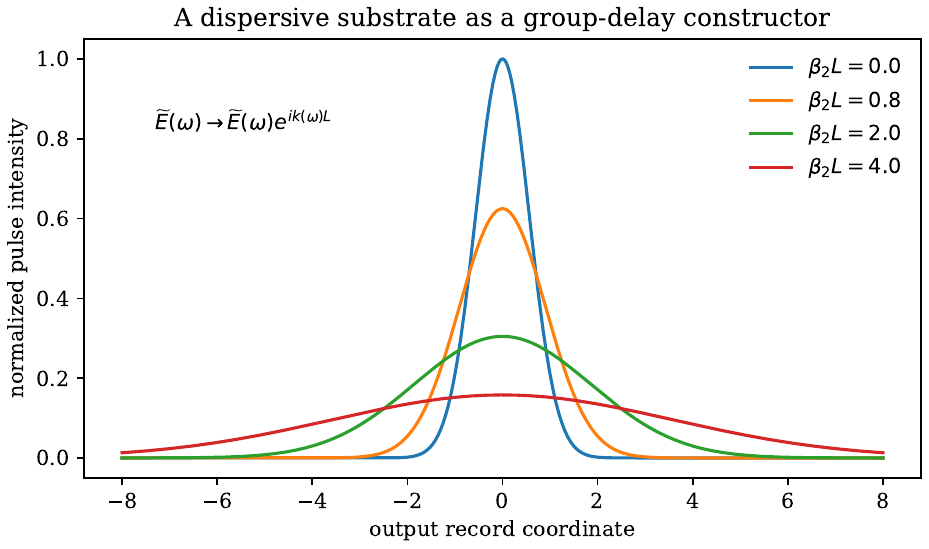}
\caption{A dispersive substrate as a group-delay constructor. The transformation $\widetilde E(\omega)\mapsto \widetilde E(\omega)e^{ik(\omega)L}$ maps spectral attributes into temporal-record attributes. Group delay and pulse broadening acquire meaning through comparison and record-forming tasks.}
\label{fig:dispersion}
\end{figure}

This example also shows why delay is not equivalent to a single scalar in all contexts. A nondispersive delay constructor may approximately preserve the pulse shape, while a dispersive constructor changes the relation between temporal ordering, bandwidth, coherence, and estimation. The task description therefore carries more physical content than a bare parameter $\tau$.

\section{Double-slit diffraction as a phase-delay record task}
\label{sec:double_slit}
The question of whether a double-slit diffraction pattern can be obtained within the present methodology has a simple answer: yes, but with a clarification. The task-based formulation does not replace Maxwell's equations, Huygens-Fresnel diffraction, or Fourier optics. Those theories remain the predictive algorithms that compute the record distribution. The constructor-theoretic contribution is to reinterpret the pattern as the output record of a task that maps path-delay and phase attributes into spatially distinguishable detector records.

Consider two identical slits of width $a$ separated by center-to-center distance $d$, illuminated by a quasi-monochromatic scalar plane wave of wavelength $\lambda$. In the Fraunhofer regime, the ordinary double-slit intensity is
\begin{equation}
I(\theta)=I_0\left(\frac{\sin\beta}{\beta}\right)^2\cos^2\gamma,
\label{eq:double_slit_intensity}
\end{equation}
with
\begin{equation}
\beta=\frac{\pi a}{\lambda}\sin\theta,
\qquad
\gamma=\frac{\pi d}{\lambda}\sin\theta.
\end{equation}
After normalization over the detector domain, this gives the record probability distribution
\begin{equation}
p(\theta|a,d)=\frac{1}{Z(a,d)}\left(\frac{\sin\beta}{\beta}\right)^2\cos^2\gamma.
\label{eq:double_slit_probability}
\end{equation}

The constructor-theoretic reading is obtained by observing that each detection angle corresponds to a path difference
\begin{equation}
\Delta \ell(\theta)=d\sin\theta,
\end{equation}
and hence to a path-delay attribute
\begin{equation}
\Delta\tau(\theta)=\frac{d\sin\theta}{c}.
\end{equation}
Relative to the optical frequency reference $\omega=2\pi c/\lambda$, this delay becomes a phase attribute
\begin{equation}
\phi(\theta)=\omega\Delta\tau(\theta)=\frac{2\pi d}{\lambda}\sin\theta=2\gamma.
\end{equation}
Therefore the interference factor in Eq.~\eqref{eq:double_slit_intensity} is the record signature of a phase-delay comparison task, while the sinc envelope is the record signature of the finite-aperture diffraction task \citep{Goodman2005,BornWolf1999}.

The double-slit constructor may be written schematically as
\begin{equation}
\Ttask_{\rm DS}:x_{\rm slit}(a,d)\otimes x_{\rm field}(\lambda)
\longrightarrow r(\theta),
\end{equation}
with record distribution $p(\theta|a,d)$. The pattern is thus not merely a spatial curve; it is a stable record distribution generated by a transformation that converts aperture geometry, phase-delay equivalence, and finite detector resolution into distinguishable records.

\begin{figure}[t]
\centering
\includegraphics[width=0.84\linewidth]{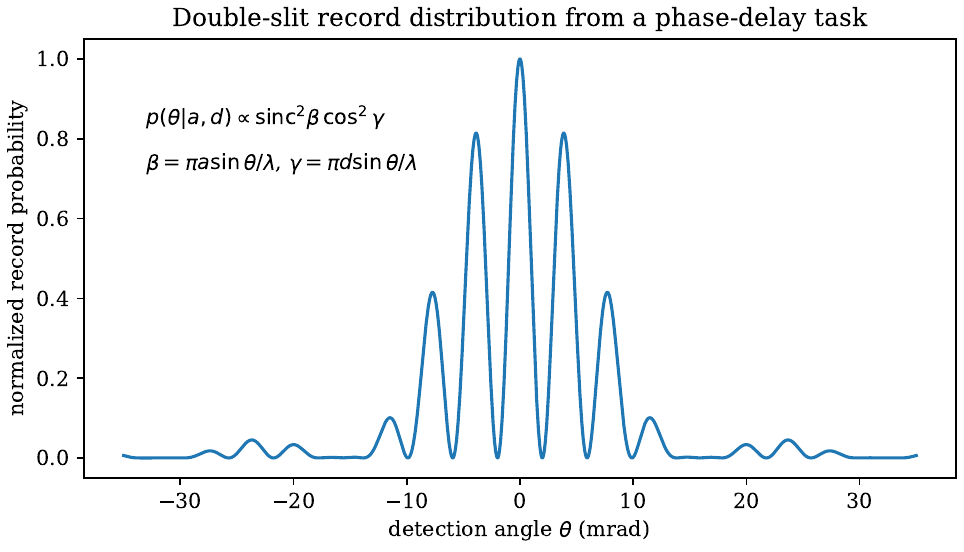}
\caption{Double-slit diffraction as the record distribution of a phase-delay task. The ordinary Fraunhofer pattern is recovered from standard scalar diffraction, while the task-based interpretation identifies the interference fringes as records of path-delay and phase-delay equivalence.}
\label{fig:double_slit}
\end{figure}

\noindent\textit{Comparison with the electromagnetic Fraunhofer result.}
For the same geometry, the standard electromagnetic aperture calculation in the far field gives, for a fixed linear polarization and a thin perfectly transmitting aperture,
\begin{equation}
\widetilde E_{\rm EM}(\theta)\propto
\int_{A_{\rm DS}}\exp\!\big[i k x\sin\theta\big]\dd x,
\qquad k=\frac{2\pi}{\lambda},
\end{equation}
where $A_{\rm DS}$ is the union of the two slit intervals. For slits centered at $x=\pm d/2$, each of width $a$, the integral evaluates to
\begin{equation}
\widetilde E_{\rm EM}(\theta)\propto
2a\frac{\sin\beta}{\beta}\cos\gamma.
\end{equation}
Therefore the electromagnetic intensity is
\begin{equation}
I_{\rm EM}(\theta)\propto
\left(\frac{\sin\beta}{\beta}\right)^2\cos^2\gamma.
\end{equation}
This is identical to Eq.~\eqref{eq:double_slit_intensity} after normalization. The equality does not mean that constructor theory has replaced the electromagnetic calculation. It means that, in the Fraunhofer regime, Maxwellian electromagnetism supplies the possible record distribution and the task formulation assigns it an operational meaning: a double-slit substrate transforms aperture and phase-delay attributes into spatial records.
\begin{figure}[H]
\centering
\includegraphics[width=0.86\linewidth]{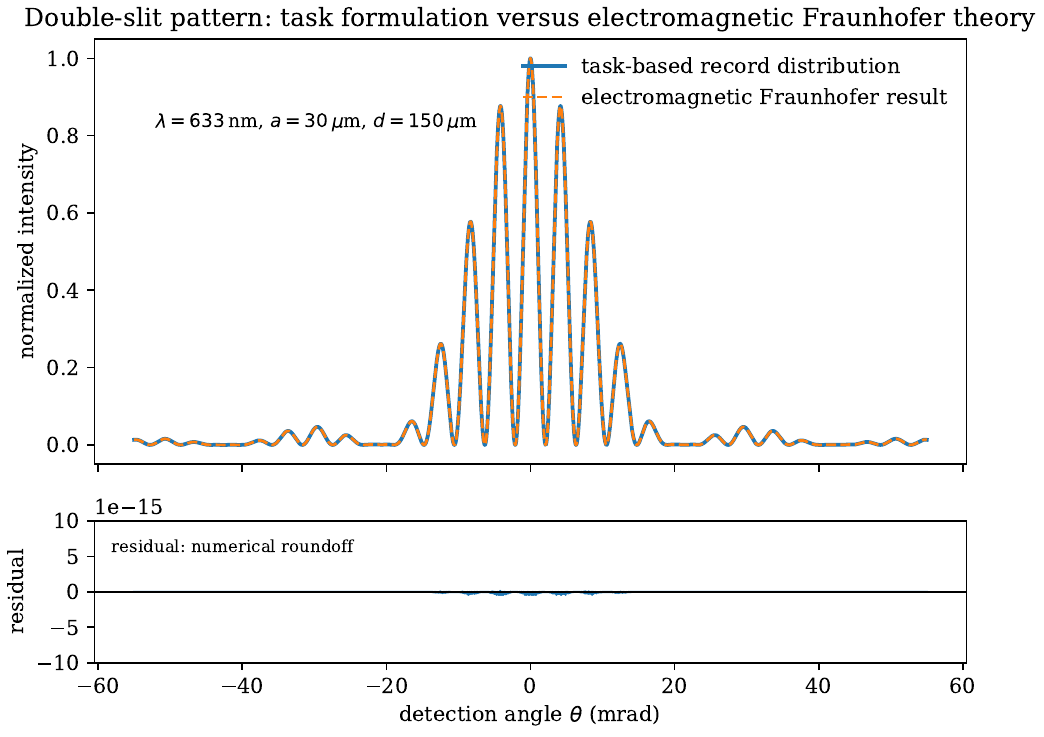}
\caption{Comparison between the constructor-theoretic record distribution and the electromagnetic Fraunhofer aperture result for a double slit with $\lambda=633$~nm, $a=30~\mu$m, and $d=150~\mu$m. In the common scalar, paraxial, thin-aperture regime, both descriptions give the same normalized angular pattern; the residual is numerical roundoff.}
\label{fig:ct_vs_em_double_slit}
\end{figure}

The comparison also identifies the domain of validity. If the slits are subwavelength, metallic, thick, lossy, resonant, or strongly polarization dependent, the electromagnetic prediction must be obtained from the full vector boundary-value problem, for example by mode matching, RCWA, FDTD, or finite-element simulation. In that regime the task formulation remains valid, but $p(\theta|a,d)$ must be computed from the appropriate Maxwell solution rather than from the simple Fraunhofer formula \citep{BornWolf1999,Goodman2005,Stratton1941,Jackson1999}.

This also gives a direct Fisher-information extension. If the slit separation $d$ is unknown, the available information for estimating it from the detected angular records is
\begin{equation}
F(d)=N\int \dd\theta\,\frac{1}{p(\theta|a,d)}\left[\frac{\partial p(\theta|a,d)}{\partial d}\right]^2.
\end{equation}
A task that demands an estimate $\hat d$ with variance below $F(d)^{-1}$ is Fisher-impossible for the specified detector aperture, photon number, wavelength, noise model, and measurement architecture. Similarly, one may define Fisher information for $a$, $\lambda$, an added phase bias, or a temporal path delay. Thus the double-slit pattern can indeed be obtained and used in the new methodology: the ordinary diffraction formula supplies the record distribution, and constructor theory classifies the corresponding estimation and distinguishability tasks as possible or impossible.

\section{Records, irreversibility, and the optical arrow of time}
\label{sec:records}
A unitary optical transformation may be reversible. An ideal delay line has an inverse. A phase shifter can be undone. In contrast, a detector record is usually thermodynamically irreversible: a photon absorption event, an avalanche in a photodiode, or a stored electronic bit is not merely another coherent optical attribute. It is a memory-bearing record.

This distinction matters for optical time. A reversible transformation may encode a delay attribute, but a temporal fact in the laboratory requires a record-forming task,
\begin{equation}
\Rtask:x_F^{(\tau)}\otimes x_C\otimes x_R\longrightarrow r_\tau.
\end{equation}
The arrow-like aspect of measured optical time enters through the production, stabilization, and comparison of records. This is compatible with the broader constructor-theoretic ambition of expressing emergent laws, including thermodynamic irreversibility, as exact statements about possible and impossible tasks \citep{Deutsch2013Constructor,Marletto2021Cannot}.

The present proposal therefore separates three notions that are often conflated: a reversible optical transformation that can be interpreted as a delay; a comparison task that relates the transformation to a reference; and an irreversible record task that stores the temporal distinction. Only their combination yields the ordinary laboratory notion of an observed optical time interval.

\section{Relation to timeless dynamics and constructor theory of time}
\label{sec:relation}
The proposal is close in spirit to relational and timeless approaches, but it is not identical to them. Page and Wootters recover apparent dynamics from correlations between a system and a clock \citep{PageWootters1983}. Rovelli emphasizes relations among variables rather than evolution in an external time \citep{Rovelli2011ForgetTime}. Constructor theory of time asks how duration and dynamics can be meaningful when the fundamental constructor-theoretic laws do not refer to time \citep{DeutschMarlettoTime2025}.

Our optical formulation is more modest. We do not claim to derive the whole temporal structure of physics. We claim that a central part of optical time can be organized as
\begin{equation}
\hbox{optical time}\quad\Longleftrightarrow\quad
\hbox{delay, phase, ordering, synchronization, comparison, and record tasks}.
\end{equation}
This is a domain-specific reconstruction. It shows how the operational content of optical delay, phase, and diffraction-related path differences can be expressed without assigning primitive explanatory status to $t$.

The approach is compatible with standard dynamical optics. Maxwell's equations, Fourier optics, and quantum optics remain the tools that compute the record distributions and the possible/impossible boundaries. The task-based formulation changes the explanatory organization: it asks which transformations make temporal attributes distinguishable and which transformations are forbidden by optical, statistical, or quantum constraints.

\section{Discussion}
\label{sec:discussion}
The constructor-theoretic formulation offers several advantages. First, it clarifies the operational status of optical delay. A delay is not merely a number attached to a path; it is a transformation class defined by comparison tasks and record distinguishability. This matters whenever different physical systems instantiate nominally equivalent delays with different noise, dispersion, loss, or reference dependence.

Second, it gives a foundations-oriented interpretation of Fisher information. Fisher information is not merely a statistical convenience. It quantifies the possibility of transforming a hidden temporal attribute into a stable record. The Cramer-Rao bound is therefore a physical impossibility statement relative to a specified constructor and measurement class.

Third, it unifies phase, delay, synchronization, and diffraction records. Optical phase is not automatically time. It becomes time through a phase-delay equivalence task involving a frequency reference. Synchronization is likewise not primitive simultaneity; it is the possibility of establishing and maintaining a comparison relation among reference substrates. Double-slit interference is a particularly transparent spatial record of phase-delay equivalence, because the path delay $d\sin\theta/c$ is converted into angular fringes.

Fourth, it suggests a classification program for optical time devices. Delay lines, cavities, dispersive media, interferometers, frequency combs, double slits, gratings, and detectors can be classified by the temporal tasks they make possible, the tasks they make impossible, and the resources required to approach the possible/impossible boundary.

There are also limitations. The present work does not provide a complete axiomatization of optical constructor theory. It uses standard classical Fisher information, with quantum Fisher information discussed only as an extension. It treats simple interferometric, dispersive, and double-slit examples rather than complex many-mode quantum optical systems. It also does not settle whether constructor theory is the correct fundamental formulation of physics. Its purpose is narrower: to show that optical time admits a precise task-based reconstruction directly connected to measurable distinguishability.

\section{Outlook}
A quantum version would replace the classical Fisher information by quantum Fisher information and formulate quantum-delay tasks constrained by nonorthogonality, measurement disturbance, and no-cloning \citep{Helstrom1976,BraunsteinCaves1994,Paris2009}. A structured-light version would treat pulse shaping, orbital angular momentum, spatial-mode engineering, and frequency-comb design as constructor engineering for temporal distinguishability. A relativistic version would formulate optical time-transfer and synchronization protocols as tasks constrained by causal structure and gravitational redshift. A thermodynamic version would analyze the energy and entropy costs of producing stable optical time records.

The most direct experimental platform is interferometric delay estimation. One may compare different physical delay constructors--free space, fiber, slow-light media, cavities, programmable pulse shapers, and multi-aperture diffractive systems--by asking not only how much delay they produce, but what temporal tasks they render possible at fixed resources. The experimentally relevant observable is the Fisher-information landscape over visibility, bandwidth, photon number, dispersion, detector jitter, and reference stability.

\section{Conclusion}
\label{sec:conclusion}
We have proposed a constructor-theoretic formulation of optical time. Delay, phase, ordering, synchronization, diffraction path difference, and temporal records were expressed as possible and impossible tasks acting on optical, reference, and record substrates. In this framework, an optical delay is not introduced as primitive time evolution. It is an attribute made meaningful by transformations that generate distinguishable records. Phase becomes temporal only through a reference-dependent phase-delay equivalence task. Synchronization becomes a possible relation among reference substrates. Measurement produces the record structure through which temporal facts enter the laboratory.

The Fisher information provides the quantitative bridge between optical metrology and constructor-theoretic possibility. A temporal-estimation task is Fisher-admissible only when the requested uncertainty is compatible with the available distinguishability resource. Otherwise it is Fisher-impossible. The Cramer-Rao bound can therefore be read as a task impossibility statement. The double-slit diffraction pattern can be recovered within the same framework: standard scalar diffraction computes $p(\theta|a,d)$, while constructor theory interprets that distribution as the record of aperture, path-delay, and phase-comparison tasks.

This viewpoint does not replace ordinary optics. It reorganizes it. Maxwellian and quantum-optical models compute the possible/impossible boundary; constructor theory gives that boundary a foundational interpretation. Optical time, in this sense, is not a primitive background parameter but an operational structure emerging from delay constructors, reference substrates, distinguishable records, and the physical limitations of temporal inference.

\section*{Declarations}
\textbf{Funding.} No external funding is declared for this manuscript draft.\\
\textbf{Conflict of interest.} The authors declare no conflict of interest.\\
\textbf{Data availability.} No experimental data are used. The figure-generation script is included in the accompanying Overleaf project.\\
\textbf{Author contributions.} J.S.-M. conceived the optical formulation and developed the initial theoretical framework, calculations, and figures. O.O.-L. contributed to the refinement of the physical interpretation, foundations-oriented framing, and critical revision of the manuscript. Both authors reviewed and approved the final manuscript draft.

\end{document}